\theoremstyle{plain}
\theoremstyle{definition}
\theoremstyle{remark}
\newtheorem{thm}{Theorem}
\renewcommand\footnotetextcopyrightpermission[1]{} 
\def\blfootnote{\gdef\@thefnmark{}\@footnotetext}
\keywords{Supervised learning, algorithmic fairness, gender bias, online recruiting, automated hiring, compounding injustices.}
\begin{document}
\title{Bias in Bios: A Case Study of Semantic Representation Bias in a High-Stakes Setting}

\author[De-Arteaga et al.]{Maria De-Arteaga$^1$, Alexey Romanov$^2$, Hanna Wallach$^3$, Jennifer Chayes$^3$, Christian Borgs$^3$, Alexandra Chouldechova$^1$, Sahin Geyik$^4$, Krishnaram Kenthapadi$^4$, Adam Tauman Kalai$^3$\\ \vspace{0.05in}
\textit{$^1$Carnegie Mellon University, $^2$University of Massachusetts Lowell, $^3$Microsoft Research, $^4$LinkedIn}}

\begin{abstract}
We present a large-scale study of gender bias in occupation classification, a task where the use of machine learning may lead to negative outcomes on peoples' lives. We analyze the potential allocation harms that can result from semantic representation bias. To do so, we study the impact on occupation classification of including explicit gender indicators---such as first names and pronouns---in different semantic representations of online biographies. Additionally, we quantify the bias that remains when these indicators are ``scrubbed,'' and describe proxy behavior that occurs in the absence of explicit gender indicators. As we demonstrate, differences in true positive rates between genders are correlated with existing gender imbalances in occupations, which may compound these imbalances.\looseness=-1
\end{abstract}


\maketitle

\section{Introduction}

\blfootnote{This paper has been accepted for publication in the \textbf{ACM Conference on Fairness, Accountability, and 
Transparency (ACM FAT*), 2019.}}

The presence of automated decision-making systems in our daily lives is growing. As a result these systems play an increasingly active role in shaping our future. Far from being passive players that consume information, automated decision-making systems are participating actors: their predictions today affect the world we live in tomorrow. In particular, they determine many aspects of how we experience the world, from the news we read and the products we shop for to the job postings we see. The increased prevalence of machine learning has therefore been accompanied by a growing concern regarding the circumstances and mechanisms by which such systems may reproduce and augment the various forms of discrimination and injustices that are present in today's society.

One domain in which the use of machine learning is increasingly popular---and in which unfair practices can lead to particularly negative consequences---is that of online recruiting and automated hiring. Maintaining an online professional presence has become increasingly important for people's careers, and this information is often used as input to automated decision-making systems that advertise open positions and recruit candidates for jobs and other professional opportunities. In order to perform these tasks, a system must be able to accurately assess people's current occupations, skills, interests, and ``potential.'' However, even the simplest of these tasks---determining someone's current occupation---can be non-trivial. Although this information may be provided in a structured form on some professional networking platforms, this is not always the case. As a result, recruiters often browse candidates' websites in an attempt to manually determine their current occupations. Machine learning promises to reduce this burden; however, as we will explain in this paper, occupation classification is susceptible to gender bias, stemming from existing gender imbalances in occupations.\looseness=-1

To study gender bias in occupation classification, we created a new dataset of hundreds of thousands of online biographies, written in English, from the Common Crawl corpus. Because biographies are typically written in the third person by their subjects (or people familiar with their subjects) and because pronouns are gendered in English, we were able to extract (likely) self-identified binary gender from the biographies. We note, though, that this binary model is a simplification that fails to capture important aspects of gender and erases people who do not fit within its assumptions.

Using this dataset, we predicted people's occupations by performing multi-class classification using three different semantic representations: bag-of-words, word embeddings, and deep recurrent neural networks. For each representation, we considered two scenarios: (1) where explicit gender indicators are available to the classifier, (2) where explicit gender indicators are ``scrubbed'' to promote fairness or to comply with regulations or laws. We define explicit gender indicators to be information, such as first names and gendered pronouns, that make it possible to determine gender. We note that the practice of ``scrubbing'' explicit gender indicators and other sensitive attributes is not unique to machine learning, and is often used as a way to mitigate the effects of implicit and explicit bias on decisions made by humans. For example, gender diversity in orchestras was significantly improved by the introduction of ``blind'' auditions, where candidates play behind a curtain~\citep{goldin2000orchestrating}.\looseness=-1

To quantify gender bias, we compute the true positive rate (TPR) gender gap---i.e., the difference in TPRs between genders---for each occupation. The TPR for a given gender and occupation is defined as the proportion of people with that gender and occupation that are correctly predicted as having that occupation. We also compute the correlation between these TPR gender gaps and existing gender imbalances in occupations, and show how this may compound these imbalances; we connect this finding with an existing notion of indirect discrimination in political philosophy. We show that ``scrubbing'' explicit gender indicators reduces the TPR gender gaps, while maintaining overall classifier accuracy. However, we also show that significant TPR gender gaps remain in the absence of explicit gender indicators, and that these gaps are correlated with existing gender imbalances. For orchestra auditions, the sounds made by candidates' shoes mean that a curtain is not sufficient to make an audition ``blind.'' It is therefore common practice to additionally roll out a carpet or to ask candidates to remove their shoes~\citep{goldin2000orchestrating}. By analogy, ``scrubbing'' explicit gender indicators is like introducing a curtain---the sounds made by the candidates' shoes remain.\looseness=-1

Our paper has two main takeaways: First, ``scrubbing'' explicit gender indicators is not sufficient to remove gender bias from an occupation classifier. Second, even in the absence of such indicators, TPR gender gaps are correlated with existing gender imbalances in occupations; occupation classifiers may therefore compound existing gender imbalances. Although we focus on gender bias, we note that other biases, such as those involving race or socioeconomic status, may also be present in occupation classification or in other tasks related to online recruiting and automated hiring. We  structure our analysis so as to inform discussions about these biases as well.\looseness=-1

In the next section, we provide a brief overview of related work. We then describe our data collection process in Section~\ref{sec:data} and outline our methodology in Section~\ref{sec:methods}, before presenting our analysis and results in Section~\ref{sec:results}. We conclude with a discussion in Section \ref{sec:discussion}. 

\section{Related Work}
\label{sec:background}

Recent work has studied the ways in which stereotypes and other human biases may be reflected in semantic representations such as word embeddings~\citep{bolukbasi2016man,caliskan2017semantics,garg2018word}. Natural language processing researchers have also studied gender bias in coreference resolution~\cite{zhao2018gender,rudinger2018gender}, showing that systems perform better when linking a gender pronoun to an occupation in which that gender is overrepresented than to an occupation in which it is underrepresented. Gender bias has also been studied in YouTube's autocaptioning~\citep{tatman2017gender}, where researchers found a higher word error rate for female speakers. In the context of language identification, researchers have also investigated racial bias, showing that African-American English is often misclassified as non-English~\citep{blodgett2017racial}. Finally, machine learning methods for identifying toxic comments exhibit disproportionately high false positive rates for words like \textit{gay} and \textit{homosexual}~\citep{dixon2017measuring}. 

In the context of structured data, there have been extensive discussions about proxy behavior that may occur when sensitive attributes are not explicitly available but can be determined from other attributes~\citep{pope2011implementing,barocas2016big,zemel2013learning}. Related discussions have focused on the phenomenon of differential subgroup validity~\citep{ayres2002outcome}, where the choice of attributes may disadvantage groups for whom the chosen attributes are not equally predictive of the target label~\citep{calders2013unbiased}. \citet{barocas2016big} discussed these issues in the context of automated hiring; \citet{kim2016data} explained how data-driven decisions that systematically bias people's access to opportunities relate to existing antidiscrimination legislation, identifying voids that may need to be filled to account for potential risks stemming from automated decision-making systems. Researchers have also discussed making available sensitive attributes as a means to improve fairness~\citep{dwork2012fairness}, as well as various ways to use these attributes~\citep{dwork2018decoupled,pope2011implementing}. Finally, although our paper does not directly consider ranking scenarios, fairness in ranking is particularly relevant to discussions about gender bias in online recruiting and automated hiring~\citep{zehlike2017fa,celis2017ranking,yang2017measuring,biega2018equity,GK18}.

We quantify gender bias by computing the TPR gender gap---i.e., the difference in TPRs between genders---for each occupation. This notion of bias is closely related to the equality of opportunity fairness metric of~\citet{hardt2016equality}. We choose to focus on TPR gender gaps because they enable us to study the ways in which gender imbalances may be compounded; in turn, we relate this to compounding injustices~\citep{hellman2017indirect}---an existing notion of indirect discrimination in political philosophy that holds that it is a general moral duty to refrain from taking actions that would harm people when those actions are informed by, and would compound, prior injustices suffered by those people. We show that the TPR gender gaps are correlated with existing gender imbalances in occupations. As a result, occupation classifiers compound injustices when existing gender imbalances are attributable to historical discrimination.

Our paper is also closely related to research on gender bias in hiring~\citep{sarsons2015gender,sarsons2017interpreting,ginther2004women,bertrand2017field}. In particular, \citet{bertrand2004emily} conducted an experiment in which they responded to help-wanted ads using fictitious resumes, varying names so as to signal gender and race, while keeping everything else the same. They were therefore able to measure the effect of (inferred) gender and race on the likelihood of being called for an interview. Similarly, we study the effect of explicit gender indicators on occupation classification.

Computational linguistics researchers have explored the use of lexical and syntactic features to infer authors' genders~\citep{cheng2011author,koppel2002automatically}. Given that our dataset consists of online biographies, our paper is also related to research on differences between the ways that men and women represent themselves. In the context of online professional presences, ~\citet{altenburger2017there} analyzed self-promotion in LinkedIn, finding that women are more modest than men in expressing accomplishments and are less likely to use free-form fields. Researchers have also studied differences in volubility between men and women~\citep{brescoll2011takes}, showing that women's fear of being highly voluble is justified by the fact that both men and women negatively evaluate highly voluble women. Moving beyond self-representation, \citet{niven2001women} analyzed congressional websites and found that differences between the ways that the media portray men and women in Congress cannot be explained by differences between the ways that they portray themselves. Meanwhile, \citet{smith2018power} analyzed attributes used to describe men and women in performance evaluations, showing that negative attributes are more often used to describe women than men. This research on representation by others relates to our paper because we cannot be sure that the online biographies in our dataset were actually written by their subjects.\looseness=-1

\section{Data Collection Process}
\label{sec:data}

To study gender bias in occupation classification, we created a new dataset using the Common Crawl. Specifically, we identified online biographies, written in English, by filtering for lines that began with a name-like pattern (i.e., a sequence of two capitalized words) followed by the string ``is a(n) (xxx) \textit{title},'' where \textit{title} is an occupation from the BLS Standard Occupation Classification system.\footnote{\url{https://www.bls.gov/soc/}} We identified the twenty-eight most frequent occupations based on their appearance in a small subset of the Common Crawl. In a few cases, we merged occupations. For example, we created the occupation \textit{professor} by merging occupations that consist of \textit{professor} and a modifier, such as \textit{economics professor}. Having identified the most frequent occupations, we processed WET\footnote{WET is a special file format containing cleaned text extracted from webpages.} files from sixteen distinct crawls from 2014 to 2018, extracting online biographies corresponding to those occupations only.
Finally, we performed de-duplication by treating biographies as duplicates if they had the same first name, last name, and occupation, and either no middle name was present or one middle name was a prefix of the other. 
The resulting dataset consists of 397,340 biographies spanning twenty-eight different occupations. Of these occupations, \textit{professor} is the most frequent, with 118,400 biographies, while \textit{rapper} is the least frequent, with 1,406 biographies (see~\autoref{fig:titles_distribution}). The longest biography is 194 tokens, while the shortest is eighteen; the median biography length is seventy-two tokens. We note that the demographics of online biographies' subjects differ from those of the overall workforce, and that our dataset does not contain all biographies on the Internet; however, neither of these factors is likely to undermine our findings.\looseness=-1

\begin{figure}[ht]
\centering
\includegraphics[width=1\linewidth, 
clip]{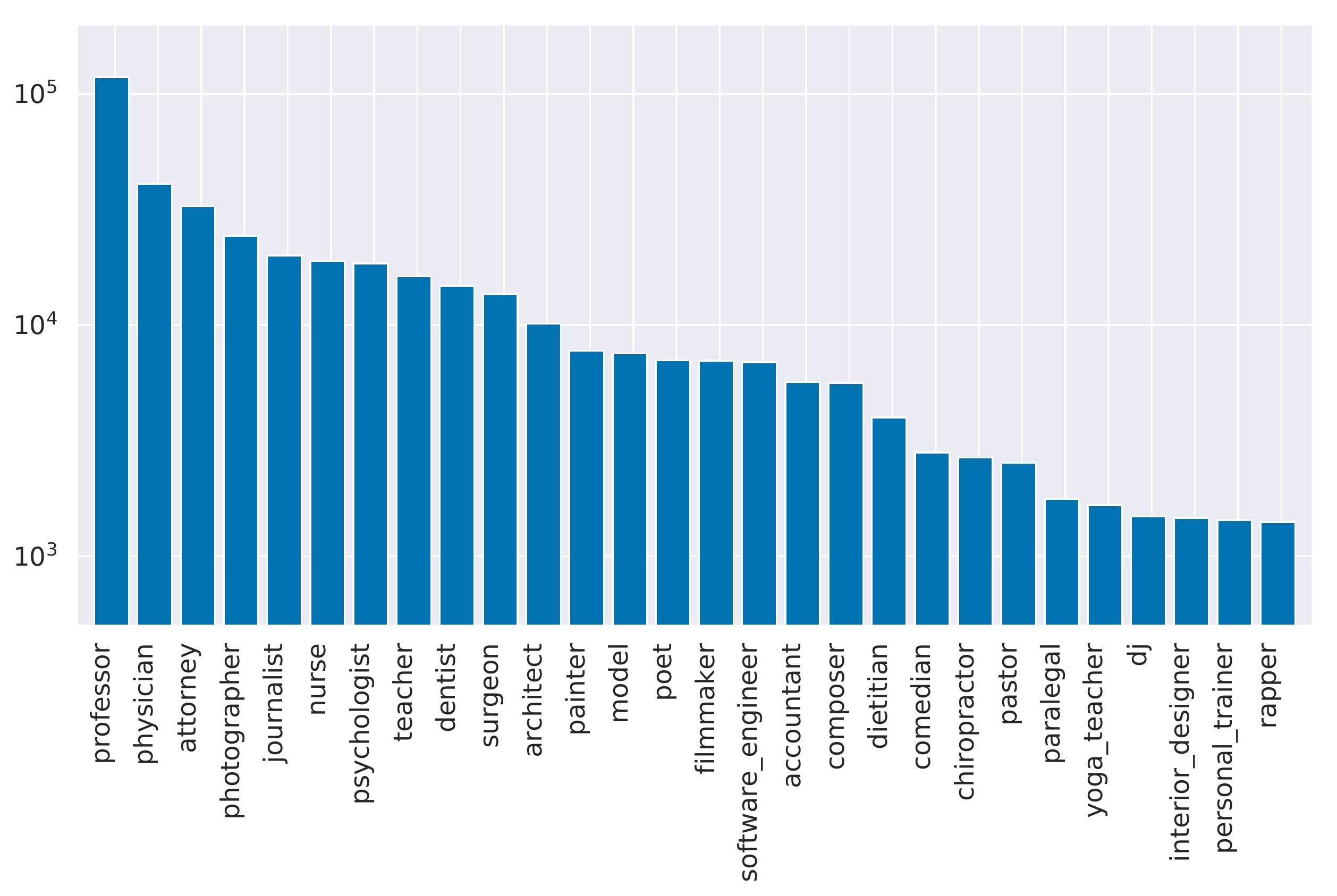} 
\caption{Distribution of the number of biographies for the twenty-eight different occupations, shown on a log scale.}
\label{fig:titles_distribution} 
\end{figure}

Because some occupations have a high gender imbalance, our validation and testing splits must be large enough that every gender and occupation are sufficiently represented. We therefore used stratified-by-occupation splits, with 65\% of the biographies (258,370) designated for training, 10\% (39,635 biographies) designated for validation, and 25\% (99,335 biographies) designated for testing.


A complete implementation that reproduces the dataset can be found in the source code available at \href{http://aka.ms/biasbios}{http://aka.ms/biasbios}.


\section{Methodology}
\label{sec:methods}


We used our dataset to predict people's occupations, taken from the first sentence of their biographies as described in the previous section, given the remainder of their biographies. For example, consider the hypothetical biography \textit{Nancy Lee is a registered nurse. She graduated from Lehigh University, with honours in 1998. Nancy has years of experience in weight loss surgery, patient support, education, and diabetes.} The goal is to predict \textit{nurse} from \textit{She graduated from Lehigh University, with honours in 1998. Nancy has years of experience in weight loss surgery, patient support, education, and diabetes.}\looseness=-1

We used three different semantic representations of varying complexity: bag-of-words (BOW), word embeddings (WE), and deep recurrent neural networks (DNN). When using the BOW and WE representations, we used a one-versus-all logistic regression as the occupation classifier; to construct the DNN representation, we started with word embeddings as input and then trained a DNN to predict occupations in an end-to-end fashion. For each representation, we considered two scenarios: (1) where explicit gender indicators---e.g., first names and pronouns---are available to the classifier, (2) where explicit gender indicators are ``scrubbed.'' For example, these scenarios correspond to predicting the occupation \textit{nurse} from the text \textit{[She] graduated from Lehigh University, with honours in 1998. [Nancy] has years of experience in weight loss surgery, patient support, education, and diabetes,} with and without the bracketed words.\looseness=-1

\subsection{Semantic Representations} 
\label{subsec:rep}

\paragraph{Bag-of-words} The BOW representation encodes the $i^{\textrm{th}}$ biography as a sparse vector $x_i^{\textrm{BOW}}$. Each element of this vector corresponds to a word type in the vocabulary, equal to 1 if the biography contains a token of this type and 0 otherwise. Despite recent successes of using more complex semantic representations for document classification, the BOW representation provides a good baseline and is still widely used, especially in scenarios where interpretability is important. To predict occupations, we trained a one-versus-all logistic regression with $L_2$ regularization using our dataset's training split represented using the BOW representation. 

\paragraph{Word embeddings} The WE representation encodes the $i^{\textrm{th}}$ biography as a vector $x_i^{\textrm{WE}}$, obtained by averaging the \texttt{fastText} word embeddings~\cite{bojanowski2017enriching,mikolov2018advances} for the word types present in that biography.\footnote{We note that the \texttt{fastText} word embeddings were trained using the Common Crawl, albeit using a different subset than the one we used to create our dataset.} The WE representation is surprisingly effective at capturing non-trivial semantic information~\cite{adi2016fine}. To predict occupations, we trained a one-versus-all logistic regression with $L_2$ regularization using our dataset's training split represented using the WE representation.\looseness=-1

\paragraph{Deep recurrent neural networks} To construct the DNN representation, we started with the \texttt{fastText} word embeddings as input and then
trained a DNN to predict occupations in an end-to-end fashion. We used an architecture similar to that of~\citet{yang2016hierarchical}, but with just one bi-directional recurrent neural network at the level of words and with gated recurrent units (GRUs)~\cite{chung2014empirical} instead of long short-term memory units; this model uses an attention mechanism---an integral part of modern neural network architectures~\cite{vaswani2017attention}. Our choice of architecture was motivated by a desire to use a relatively simple model that would be easy to interpret.\looseness=-1

Formally, given the $i^{\textrm{th}}$ biography represented as a sequence of tokens $w_i^1, \ldots, w_i^T$, we start by replacing each token $w_i^t$ with the \verb|fastText| word embedding for that word type to yield $e_i^1, \ldots, e_i^T$. The DNN then uses a GRU to process the biography in both forward and reverse directions and concatenates the corresponding hidden states from both directions to re-represent the $t^{\textrm{th}}$ token as follows:
\begin{align}
\overrightarrow{h_i^t} &= \overrightarrow{GRU}(e_i^t, h_i^{t-1}) \\
\overleftarrow{h_i^t} &= \overleftarrow{GRU}(e_i^t,h_i^{t+1}) \\
h_i^t &= [\overleftarrow{h_i^t};\overrightarrow{h_i^t}].
\end{align}
Next, the DNN projects each hidden state $h_i^t$ to the attention dimension $k_a$ via a fully connected layer with weights $W_a$ and $b_a$, and transforms the result into an unnormalized scalar $u_i^t$ via a vector $w_a$:\looseness=-1
\begin{align}
\hat{u}_i^t &= \tanh\,(W_a\, h_i^t + b_a) \\
u_i^t &= w_a^\intercal \hat{u}_i^t.
\end{align}
Each scalar is then normalized to yield an attention weight:
\begin{equation}
\alpha_i^t = \frac{\exp\,(u_i^t)}{\sum_{t'=1}^T \exp\,(u_i^{t'})}
\label{eq:attetnion_alpha}.
\end{equation}
Finally, we obtain the DNN representation via a weighted sum:
\begin{equation}
x_i^{\textrm{DNN}} = \sum_{t=1}^T \alpha_i^t\,h_i^t.
\end{equation}
The DNN makes predictions as follows:
\begin{equation}
\hat{y}_i = \textrm{softmax}(W_0\, x_i^{\textrm{DNN}} + b_0),
\end{equation}
where $\hat{y}_i$ is the predicted occupation for the $i^{\textrm{th}}$ biography.

We trained the DNN using our dataset's training split and a standard cross-entropy loss applied to the output of the last layer.

\subsection{Explicit Gender Indicators}
\label{sec:explicit}

For each semantic representation, we considered two scenarios. In the first scenario, the representation included all word types, meaning that explicit gender indicators are available to the occupation classifier. In the second scenario, we ``scrubbed'' explicit gender indicators prior to creating the representation, meaning that these indicators are not available to the occupation classifier. Specifically, we deleted the subject's first name, along with the words \textit{he}, \textit{she}, \textit{her}, \textit{his}, \textit{him}, \textit{hers}, \textit{himself}, \textit{herself}, \textit{mr}, \textit{mrs}, and \textit{ms} from each biography.

\section{Analysis and Results}
\label{sec:results}

In this section, we analyze the potential allocation harms that can result from semantic representation bias. To do this, we study the performance of the occupation classifier for each semantic representation, with and without explicit gender indicators, as described in the previous section. The classifiers' overall accuracies are shown in Figure~\ref{fig:acc}. We start by analyzing gender bias for the scenario in which the semantic representations include all word types, including explicit gender indicators. We then analyze gender bias in the scenario in which explicit gender indicators are ``scrubbed,'' and use the DNN's per-token attention weights to understand proxy behavior that occurs in the absence of explicit gender indicators.\looseness=-1

\begin{figure}[t]
\includegraphics[width=\linewidth]{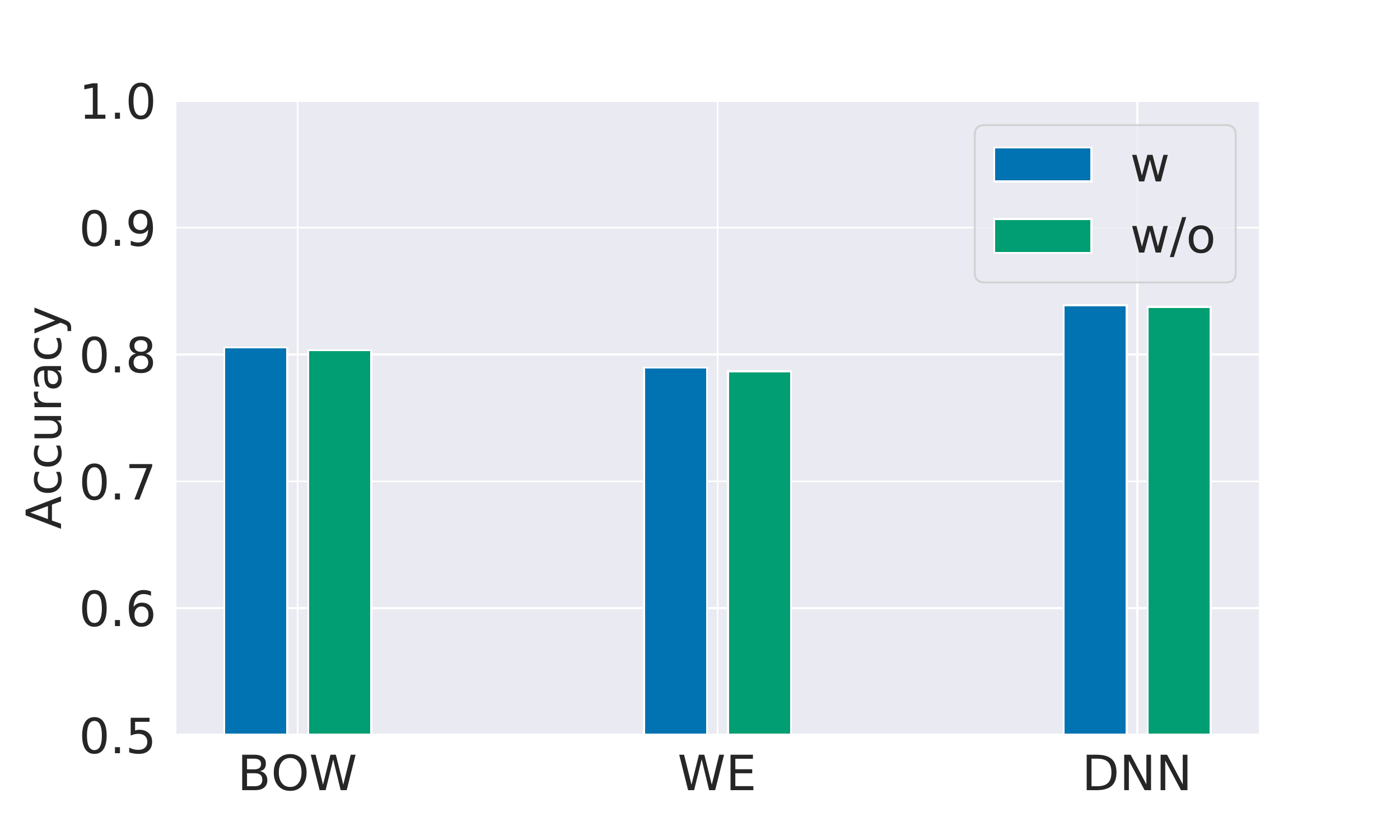}
\caption{Occupation classifier accuracy for each semantic representation, with and without explicit gender indicators.}
\label{fig:acc}
\end{figure}

\subsection{With Explicit Gender Indicators}

\paragraph{True positive rate gender gap} For each semantic representation, we quantify gender bias by using our dataset's testing split to calculate the occupation classifier's TPR gender gap---i.e., the difference in TPRs between binary genders $g$ and ${\sim} g$---for each occupation $y$:
\begin{align}
\textrm{TPR}_{g,y} &= P\,[\hat{Y} = y \,|\, G = g, Y=y]\\
\textrm{Gap}_{g,y} & = \textrm{TPR}_{g,y} - \textrm{TPR}_{{\sim}g,y},
\label{eg:gendergap}
\end{align}
where $\hat{Y}$ and $Y$ are random variables representing the predicted and target labels (i.e., occupations) for a biography and $G$ is a random variable representing the binary gender of the biography's subject.

Defining the percentage of people with gender $g$ in occupation $y$ as $\pi_{g,y} = P\,[G=g \,|\, Y=y]$, Figure~\ref{image:bow_gap} shows $\textrm{Gap}_{\textrm{female},y}$ versus $\pi_{\textrm{female},y}$ for each occupation $y$ for the BOW representation with explicit gender indicators; Figure~\ref{image:gender_gap} depicts the same information for all three representations, with and without explicit gender indicators.\looseness=-1

\begin{figure}[ht]
\includegraphics[width=\linewidth]{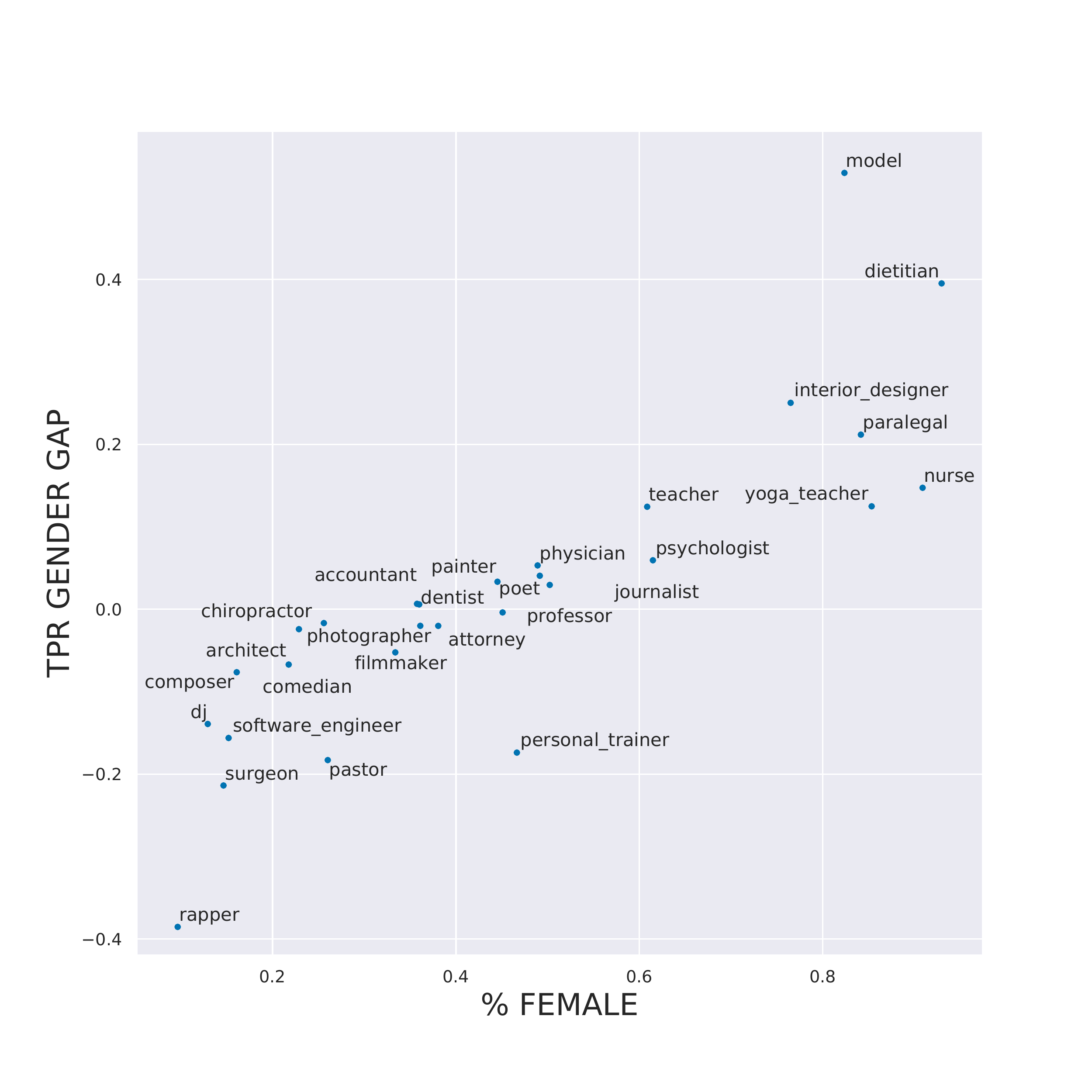}
\caption{$\textrm{Gap}_{\textrm{female},y}$ versus $\pi_{\textrm{female},y}$ for each occupation $y$ for the BOW representation with explicit gender indicators.}
\label{image:bow_gap}
\end{figure}

\begin{figure*}[t]
\includegraphics[width=\linewidth]{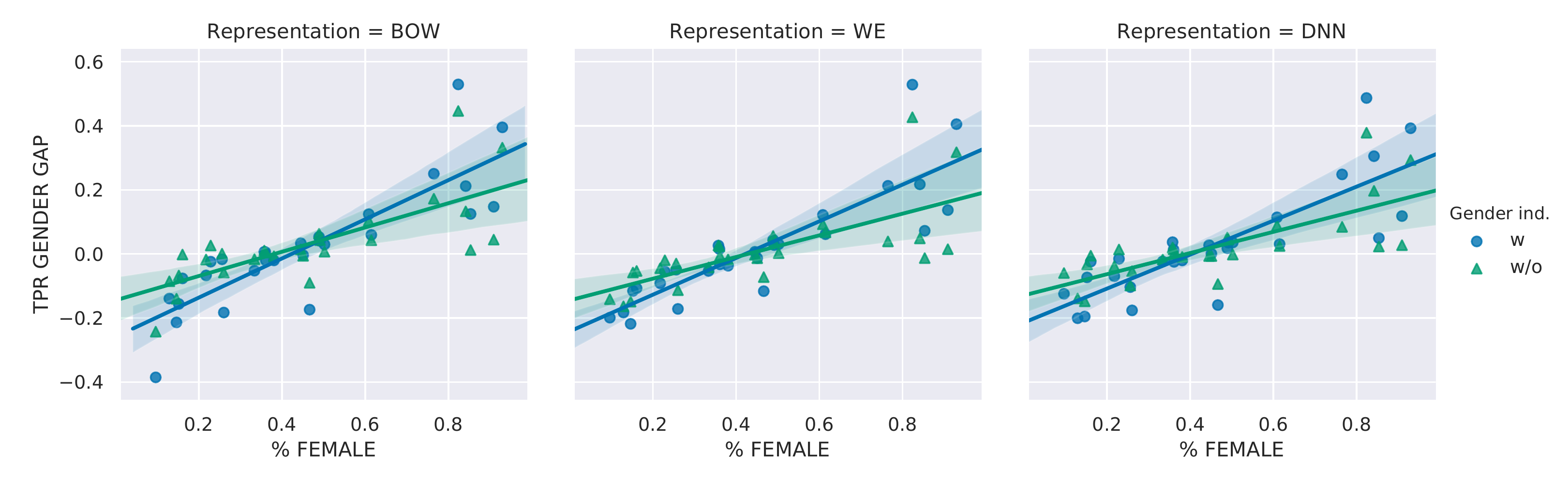}
\caption{$\textrm{Gap}_{\textrm{female},y}$ versus $\pi_{\textrm{female},y}$ for each occupation $y$ for all three semantic representations, with and without explicit gender indicators. Correlation coefficients: BOW-w 0.85; BOW-wo 0.74; WE-w 0.86; WE-wo 0.71; DNN-w 0.82, DNN-wo 0.74.\looseness=-1}
\label{image:gender_gap}
\end{figure*}

\paragraph{Compounding imbalance} We define the gender imbalance of occupation $y$ as $\frac{\pi_{g,y}}{\pi_{\sim g, y}}$; gender $g$ is underrepresented if $\frac{\pi_{g,y}}{\pi_{\sim g, y}} < 1$ or, equivalently, if $\pi_{g,y}<0.5$. The gender imbalance is compounded if the underrepresented gender has a lower TPR than the overrepresented gender---e.g., if $\textrm{Gap}_{g,y} < 0$ and $g$ is underrepresented.\looseness=-1

\begin{thm}
If  $\pi_{g,y}<0.5$ and $\textrm{Gap}_{g,y}<0$, then
\begin{equation}
P\,[G=g\,|\,Y=\hat{Y}=y]<\pi_{g,y}.
\end{equation}
\end{thm}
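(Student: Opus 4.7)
The plan is to proceed by a direct application of Bayes' rule, treating $Y=y$ as the conditioning event throughout so that the quantities $\pi_{g,y}$ and $\textrm{TPR}_{g,y}$ appear naturally. The claim to prove is that among the biographies correctly classified as occupation $y$, the share belonging to the underrepresented gender is \emph{strictly smaller} than their baseline share in that occupation---a statement that should reduce to algebra once the conditional probability is unpacked.

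First, I would write
\begin{equation}
P\,[G=g\,|\,Y=\hat{Y}=y] \;=\; \frac{P\,[\hat{Y}=y\,|\,G=g,Y=y]\,P\,[G=g\,|\,Y=y]}{P\,[\hat{Y}=y\,|\,Y=y]},
\end{equation}
and expand the denominator by conditioning on $G$, giving
\begin{equation}
P\,[G=g\,|\,Y=\hat{Y}=y] \;=\; \frac{\textrm{TPR}_{g,y}\,\pi_{g,y}}{\textrm{TPR}_{g,y}\,\pi_{g,y}+\textrm{TPR}_{\sim g,y}\,\pi_{\sim g,y}}.
\end{equation}
The goal inequality $P\,[G=g\,|\,Y=\hat{Y}=y]<\pi_{g,y}$ can then be manipulated directly: dividing both sides by $\pi_{g,y}>0$ and clearing the (positive) denominator yields $\textrm{TPR}_{g,y}<\textrm{TPR}_{g,y}\,\pi_{g,y}+\textrm{TPR}_{\sim g,y}\,\pi_{\sim g,y}$, which after using $1-\pi_{g,y}=\pi_{\sim g,y}$ simplifies to $\textrm{TPR}_{g,y}\,\pi_{\sim g,y}<\textrm{TPR}_{\sim g,y}\,\pi_{\sim g,y}$. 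Since $\pi_{g,y}<0.5$ gives $\pi_{\sim g,y}>0$, I can cancel $\pi_{\sim g,y}$ to obtain the equivalent condition $\textrm{TPR}_{g,y}<\textrm{TPR}_{\sim g,y}$, i.e.\ $\textrm{Gap}_{g,y}<0$.

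I would then conclude by noting that the two hypotheses are used precisely as follows: $\textrm{Gap}_{g,y}<0$ gives the needed strict inequality between TPRs, and $\pi_{g,y}<0.5$ (together with $\pi_{g,y}>0$, which is implicit since the conditional probability is well-defined) ensures that both $\pi_{g,y}$ and $\pi_{\sim g,y}$ are strictly positive, which is what legitimizes the cancellations and the division by the denominator above.

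There is really no substantive obstacle here: the statement is essentially a Bayes-rule identity dressed up in fairness notation, and the only thing to be careful about is the direction of the inequality when dividing by $\pi_{g,y}$ and by $\pi_{\sim g,y}$, both of which are positive under the stated hypotheses. The ``main'' point, if anything, is interpretive rather than technical: the underrepresentation assumption $\pi_{g,y}<0.5$ is what makes the conclusion worth stating, since it shows that a negative TPR gap on the minority gender mechanically amplifies---rather than merely preserves---the existing gender imbalance once one restricts attention to correctly classified individuals.
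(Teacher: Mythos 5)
Your proposal is correct and follows essentially the same route as the paper: both apply Bayes' rule with $Y=y$ as the conditioning event and reduce the claim to $\textrm{TPR}_{g,y}<\textrm{TPR}_{\sim g,y}$, the only cosmetic difference being that you expand the denominator $P[\hat{Y}=y\,|\,Y=y]$ via the law of total probability and manipulate the target inequality directly, whereas the paper compares the posterior odds $P[G=g\,|\,Y=\hat{Y}=y]/P[G={\sim}g\,|\,Y=\hat{Y}=y]$ to the prior odds $\pi_{g,y}/\pi_{\sim g,y}$ and leaves the final step (monotonicity of $x\mapsto x/(1-x)$) implicit. If anything, your version closes that last implicit step explicitly, and your observation that $\pi_{g,y}<0.5$ is needed only for positivity and interpretation, not for the algebra, is accurate.
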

\begin{proof}
Via Bayes theorem,
\begin{equation}
P\,[G=g\,|\,Y=\hat{Y}=y]= \frac{\pi_{g,y} \,\textrm{TPR}_{g,y}}{P\,[\hat{Y}=y\,|\,Y=y]}.
\end{equation}
If $\pi_{g,y}<\pi_{{\sim}g,y}$ and $\textrm{TPR}_{g,y}<\textrm{TPR}_{{\sim}g,y}$, then
\begin{equation}
 \frac{P\,[G=g\,|\,Y=\hat{Y}=y]}{P\,[G={\sim}g\,|\,Y=\hat{Y}=y]} = \frac{\pi_{g,y}\,\textrm{TPR}_{g,y}}{\pi_{{\sim}g,y}\,\textrm{TPR}_{{\sim}g,y}}< \frac{\pi_{g,y}}{\pi_{{\sim}g,y}},
\end{equation}
so the gender imbalance for the true positives in occupation $y$ is larger than the initial gender imbalance in that occupation.
\end{proof}

As explained in Section~\ref{sec:background}, if the initial gender imbalance is due to prior injustices, an occupation classifier will compound these injustices, which may correspond to indirect discrimination~\citep{hellman2017indirect}. 

It is clear from Figure~\ref{image:bow_gap} that there are few occupations with an equal percentage of men and women---i.e., almost all occupations have a gender imbalance---and that for that for occupations in which women (conversely men) are underrepresented, $\textrm{Gap}_{\textrm{female},y} < 0$ (conversely $\textrm{Gap}_{\textrm{male},y} < 0$). In other words, there is a positive correlation between the TPR gender gap for an occupation $y$ and the gender imbalance in that occupation. (Figure~\ref{image:gender_gap} illustrates that this is also the case for the WE and DNN representations.) As a result, if the occupation classifier for the BOW representation were used to recruit candidates for jobs in occupation $y$, it would compound the gender imbalance by a factor of $\frac{\textrm{TPR}_{g,y}}{\textrm{TPR}_{{\sim} g, y}}$, where $g$ is the underrepresented gender. For example, $14.6\%$ of the surgeons in our dataset's testing split are women---i.e., $\pi_{\textrm{female},\textrm{surgeon}} < 0.5$. The classifier for the BOW representation is able to correctly predict that $71\%$ of male surgeons and $54.5\%$ of female surgeons are indeed surgeons---i.e., $\textrm{Gap}_{\textrm{female},\textrm{surgeon}} < 0$. Consequently, only $11.6\%$ of the true positives are women, so the gender imbalance is compounded.



\paragraph{Counterfactuals} To isolate the effects of explicit gender indicators on the representations' occupation classifiers, we examined differences between the classifiers' predictions on our dataset's testing split as described above and their predictions on our dataset's testing split with first names removed and other explicit gender indicators (see Section~\ref{sec:explicit}) swapped for their complements, keeping everything else the same. This analysis is similar in spirit to the experiment of \citet{bertrand2004emily}, in which they responded to help-wanted ads using fictitious resumes in order to measure the effect of gender and race on the likelihood of being called for an interview. By analyzing the counterfactuals obtained by swapping gender indicators, we can answer the question, ``Which occupation would this classifier predict if this biography had used indicators corresponding to the other gender.'' This question is interesting because we would expect an occupation classifier to predict the same occupation for a man and a woman with identical biographies. We note that this question is not the same as the question, ``Which occupation would this classifier predict if this biography's subject were the other gender.'' Although the latter question is arguably more interesting, it cannot be answered without additionally changing all other factors that are correlated with gender~\citep{kilbertus2017avoiding}. 


For the BOW representation, we find that the classifier's predictions for $5.5\%$ of the biographies in our testing split change when their gender indicators are swapped; for the WE and DNN representations, these percentages are $12.2\%$ and $4.6\%$, respectively. To better understand the effects of explicit gender indicators on the classifiers' predictions, we consider pairs of occupations. Specifically, for each gender $g$ and pair of occupations $(y^1, y^2)$, we identify the set of biographies that are incorrectly predicted as having occupation $y^1$ with their original gender indicators, but correctly predicted as having occupation $y^2$ when their gender indicators are swapped:
\begin{equation}
\mathbb{S}_{g,(y^1,y^2)} = \{x^R_i: \hat{y}_i = y^1,\hat{y}^{(g\leftrightarrow {\sim} g)}_i = y^2, y_i = y^2 \},
\end{equation}
where $x^R_i$ is the $i^{\textrm{th}}$ biography, $y_i$ is the target label (i.e., occupation) for that biography, $\hat{y}_i$ is the predicted label for that biography with its original gender indicators, and $\hat{y}^{(g \leftrightarrow {\sim} g)}_i$ is the predicted label for that biography when its gender indicators are swapped. For example, $\mathbb{S}_{\textrm{female},(\textrm{nurse},\textrm{surgeon})}$ is the set of biographies for female surgeons who are incorrectly predicted as nurses, but correctly predicted as surgeons when their biographies use male indicators. We also identify the total set of biographies $\mathbb{S}_{g,y^2}$ that are only correctly predicted as having occupation $y^2$ when their gender indicators are swapped, and then calculate the percentage of these biographies for which the predicted label changes from $y^1$ to $y^2$:
\begin{eqnarray}\label{eq:pi_swap}
\Pi_{g,(y^1,y^2)} = \frac{|\mathbb{S}_{g,(y^1,y^2)} |}{|\mathbb{S}_{g,y^2}|} \times 100\%.
\end{eqnarray}

Tables \ref{table:swap_m} and \ref{table:swap_f} list, for the BOW representation, the five pairs of occupations with the largest values of $\Pi_{g,(y^1,y^2)}$. For example, $7.1\%$ of male paralegals whose occupations are only correctly predicted when their gender indicators are swapped are incorrectly predicted as attorneys when their biographies use male indicators. Similarly, $14.7\%$ of female rappers whose occupations are only correctly predicted when their gender indicators are swapped are incorrectly predicted as models when their biographies use female indicators.

\begin{table}[ht!]
\caption{Pairs of occupations with the largest values of $\Pi_{\textrm{male},(y^1,y^2)}$---i.e., the percentage of men's biographies that are only correctly predicted as $y^2$ when their indicators are swapped for which the predicted label changes from $y^1$.}
\label{table:swap_m}
\begin{tabular}{lll}
\toprule
$y^1$     & $y^2$       & $\Pi_{\textrm{male},(y^1,y^2)}$ \\ \midrule
attorney & paralegal  & $7.1\%$           \\ 
architect  & interior designer   & $4.7\%$             \\ 
professor  & dietitian  & $4.3\%$            \\ 
photographer   & interior designer & $3.5\%$            \\
teacher  & yoga teacher & $3.3\%$              \\ \bottomrule
\end{tabular}
\vspace{0.5cm}
\caption{Pairs of occupations with the largest values of $\Pi_{\textrm{female},(y^1,y^2)}$---i.e., the percentage of women's biographies that are only correctly predicted as $y^2$ when their indicators are swapped for which the predicted label changes from $y^1$.}
\label{table:swap_f}
\begin{tabular}{lll}
\toprule
$y^1$     & $y^2$       & $\Pi_{\textrm{female},(y^1,y^2)}  $ \\ \midrule
model   & rapper   & $14.7\%$             \\
teacher &  pastor   & $8.5\%$           \\ 
professor   &  software engineer  & $6.5\%$            \\
professor   & surgeon & $4.8\%$            \\
physician   & surgeon & $3.8\%$              \\ \bottomrule
\end{tabular}
\end{table}

\subsection{Without Explicit Gender Indicators}
\label{subsec:scrub}

\paragraph{Remaining gender information} If there are no differences between the ways that men and women in occupation $y$ represent themselves in their biographies other than explicit gender indicators, then ``scrubbing'' these indicators should be sufficient to remove all information about gender from the biographies---i.e.,\looseness=-1
\begin{equation}
P\,[\tilde{X}^R = \tilde{x}^R \,|\, G = g, Y = y] = P\,[\tilde{X}^R = \tilde{x}^R \,|\, G = {\sim}g, Y = y],
\end{equation}
where $\tilde{X}^R$ is a random variable representing a biography without explicit gender indicators, $G$ is a random variable representing the binary gender of the biography's subject, and $Y$ is a random variable representing the biography's target label (i.e., occupation). In turn, this would mean that the TPRs for genders $g$ and ${\sim}g$ are identical:
\begin{align}
\textrm{TPR}_{g,y} &= P\,[\hat{Y} = y \,|\, G = g, Y = y]\\
&= P\,[\hat{Y} = y \,|\, G = {\sim} g, Y = y]\\
&= \textrm{TPR}_{{\sim}g,y},
\end{align}
where $\hat{Y} = f(\tilde{X}^R)$ is a random variable representing the predicted label (i.e., occupation) for $\tilde{X}^R$. Moreover, it would also mean that
\begin{equation}
P\,[G = g \,|\, \tilde{X}^R = \tilde{x}^R,Y = y] = P[G = {\sim}g \,|\, \tilde{X}^R = \tilde{x}^R,Y = y],
\end{equation}
making it impossible to predict the gender of a ``scrubbed'' biography's subject belonging to occupation $y$ better than random.

In order to determine whether ``scrubbing'' explicit gender indicators is sufficient to remove all information about gender, we used a balanced subsample of our dataset to predict people's gender. We created a  subsampled training split by first discarding from our dataset's training split all occupations for which there were not at least $1,000$ biographies for each gender. For each of the remaining twenty-one occupations, we then subsampled $1,000$ biographies for each gender to yield $42,000$ biographies, balanced by occupation and gender. To create a subsampled validation split, we first identified the occupation and gender from those represented in the subsampled training split with the smallest number of biographies in our dataset's validation split. Then, we subsampled that number of biographies from our dataset's validation split for each of the twenty-one occupations represented in the subsampled training split and each gender. We created a subsampled testing split similarly. When using the BOW and WE representations, we used a logistic regression with $L_2$ regularization as the gender classifier; to construct the DNN representation, we started with word embeddings as input and then trained a DNN to predict gender in an end-to-end fashion, similar to the methodology described in Section~\ref{sec:methods}.\looseness=-1

Using the subsampled testing split, we find that the gender classifier for the BOW representation has an accuracy of $65.5\%$, while the DNN representation has an accuracy of $68.2\%$. These accuracies are higher than $50\%$, so ``scrubbing'' explicit gender indicators is not sufficient to remove all information about gender. This finding is reinforced by the scatterplot in Figure~\ref{fig:scatter}, which shows log frequency versus correlation with $G = \textrm{female}$ for each word type in the vocabulary. It is clear from this scatterplot that deleting all words that are correlated with gender would not be feasible.\looseness=-1

\begin{figure}[ht]
\includegraphics[width=\linewidth]{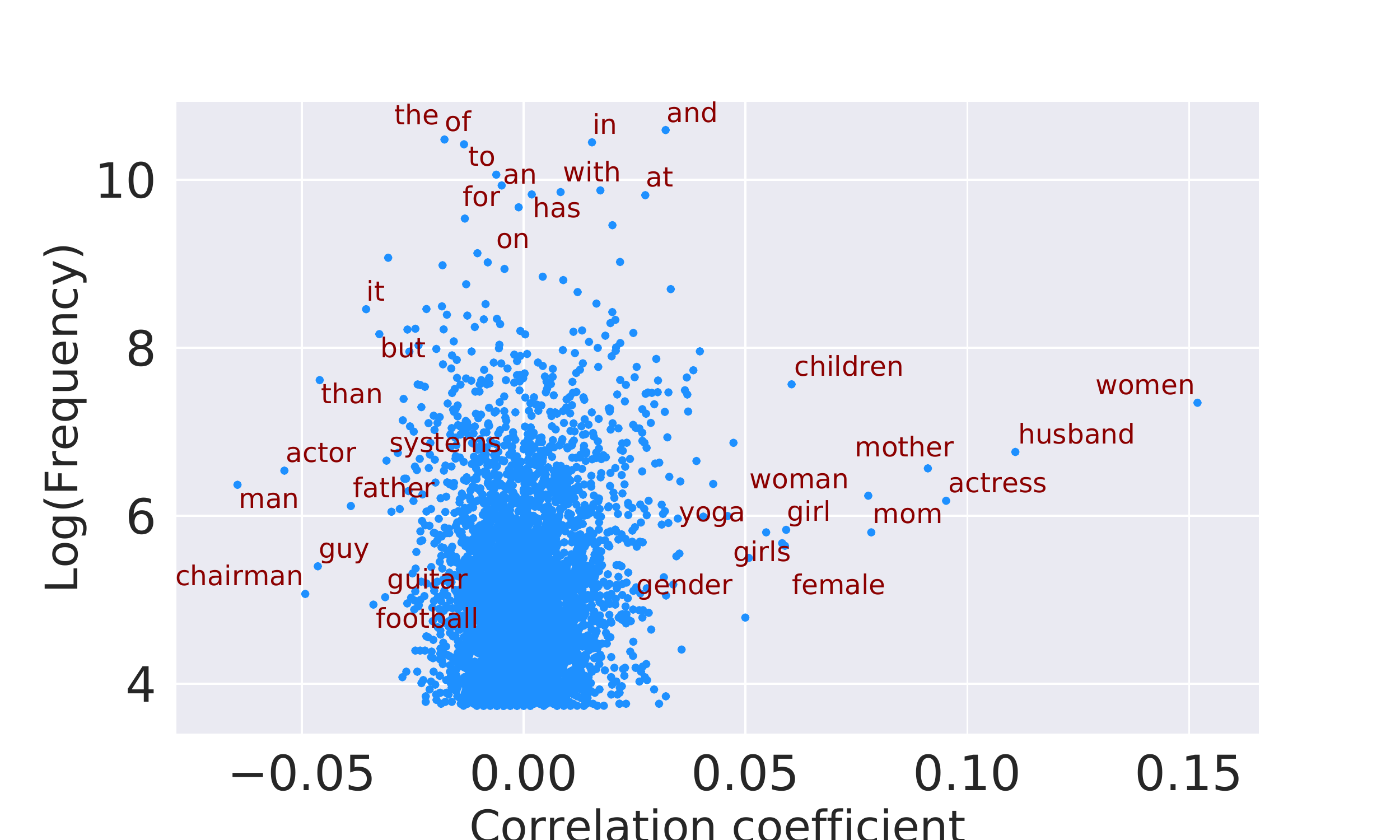}
\caption{Scatterplot of log frequency versus correlation with $G = \textrm{female}$ for each word type in the vocabulary.}
\label{fig:scatter}
\end{figure}

\paragraph{True positive rate gender gap and compounding imbalance} For each semantic representation, we again quantify gender bias by using our (original) dataset's testing split to calculate the occupation classifier's TPR gender gap for each occupation. Figure~\ref{image:gender_gap} shows $\textrm{Gap}_{\textrm{female},y}$ versus $\pi_{\textrm{female},y}$ for each occupation $y$ for all three representations, with and without explicit gender indicators. ``Scrubbing'' explicit gender indicators reduces the TPR gender gaps, while the classifiers' accuracies (shown in Figure~\ref{fig:acc}) remain roughly the same; however, for some occupations, $\textrm{Gap}_{\textrm{female},y}$ is still very large. Moreover, because there is still a positive correlation between the TPR gender gap for an occupation $y$ and the gender imbalance in that occupation, ``scrubbing'' explicit gender indicators will not prevent the classifiers from compounding gender imbalances.\looseness=-1

We note that compounding imbalances are especially problematic if people repeatedly encounter such classifiers---i.e., if an occupation classifier's predictions determine the data used by subsequent occupation classifiers. Who is offered a job today will affect the gender (im)balance in that occupation in the future. If a classifier compounds existing gender imbalances, then the underrepresented gender will, over time, become even further underrepresented---a phenomenon sometimes referred to as the ``leaky pipeline.'' 

To illustrate this phenomenon, we performed simulations using the DNN representation in which the candidate pool at time $t+1$ is defined by the true positives at time $t$. Defining the percentage of people with gender $g$ in occupation $y$ at time $t$ as $\pi_{g,y}^{(t)}$, we fit a linear regression to the TPR gender gaps for different values of $\pi_{g,y}^{(t)}$:\looseness=-1
\begin{equation}\label{eq:reg_gap}
\widehat{\textrm{Gap}}_{g,y}^{(t)} = \pi_{g,y}^{(t)}\,\beta_1 + \beta_0.
\end{equation}
Using this regression model, we are then able to calculate the percentage of people with gender $g$ in occupation $y$ at time $t+1$:
\begin{equation}\label{eq:pi_comp}
\pi_{g,y}^{(t+1)} = \frac{\pi_{g,y}^{(t)}\,\textrm{TPR}_{g,y}^{(t)}}{\pi_{{\sim}g,y}^{(t)}\,(\textrm{TPR}_{g,y}^{(t)}+\textrm{Gap}_{g,y}^{(t)} )+\pi_{g,y}^{(t)}\,\textrm{TPR}_{g,y}^{(t)}}. 
\end{equation}

Figure \ref{fig:comp} shows $\pi_{g,y}^{(t)}$ for $t=0, \ldots, 10$; each subplot corresponds to a different initial gender imbalance. Over time, the gender imbalances compound. We note that there are many different TPR pairs $\textrm{TPR}^{(t)}_{g,y}$ and $\textrm{TPR}^{(t)}_{{\sim}g,y}$ that can result in a given TPR gender gap $\textrm{Gap}^{(t)}_{g,y}$. For example, a TPR gender gap of $-0.2$ might correspond to $0.6-0.8$ or to $0.7-0.9$. Moreover, different TPR pairs will result in different percentages of people with gender $g$ in occupation $y$ at time $t+1$. The bands in Figure~\ref{fig:comp} therefore reflect these differences.

 \begin{figure}[ht]
 \includegraphics[width=\linewidth, trim={0 0 3cm 0},clip]{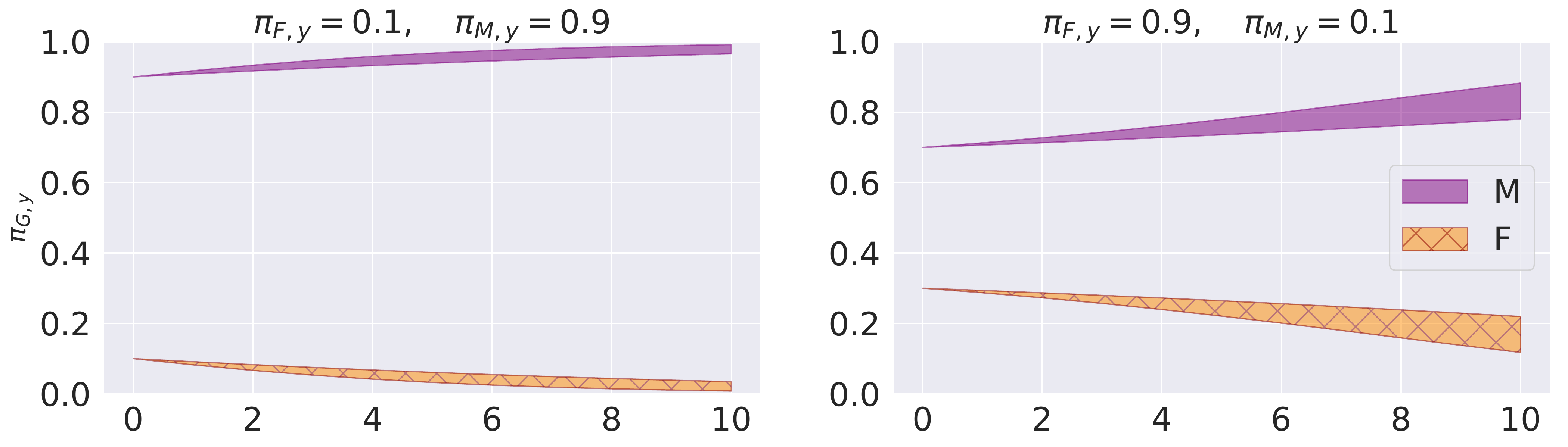}
\caption{Simulations of compounding imbalances using the DNN representation. Each subplot corresponds to a different initial gender imbalance and shows $\pi_{g,y}^{(t)}$ for $t=0, \ldots, 10$.}
\label{fig:comp}
\end{figure}

\paragraph{Attention to gender} The DNN's per-token attention weights allow us to understand proxy behavior that occurs in the absence of explicit gender indicators. The attention weights indicate which tokens are most predictive. For example, Figure~\ref{fig:bill_gates_attention} depicts the 
per-token attention weights from the occupation classifier for the DNN representation when predicting Bill Gates' occupation from an excerpt of his biography on Wikipedia; the larger the weight, the stronger the color. The attention weights for the words \textit{software} and \textit{architect} are very large, and the DNN predicts \textit{software engineer}.

\begin{figure}[t]
\centering
\fbox{
\includegraphics[width=0.9\linewidth, trim={1.2cm 16cm 1.2cm 1.1cm}, clip]{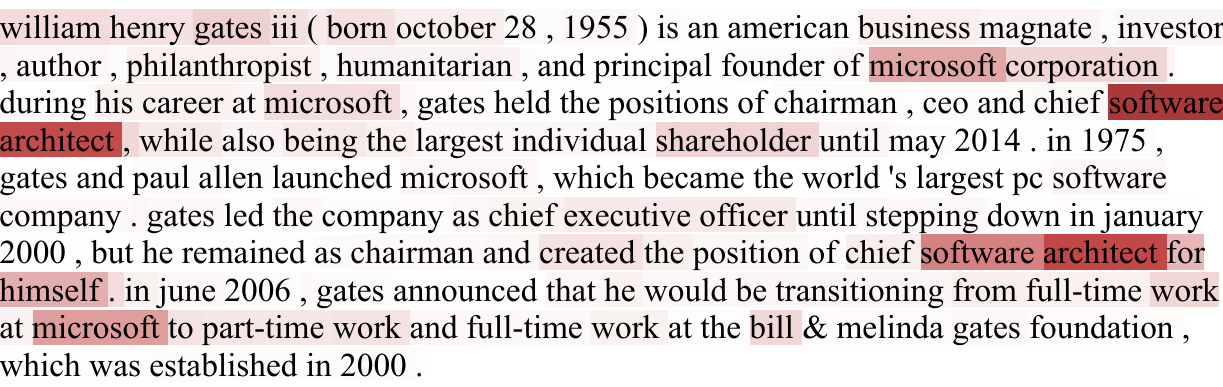}
}
\caption{Visualization of the DNN's per-token attention weights. Predicted label (i.e., occupation): \textit{software engineer}.}
\label{fig:bill_gates_attention}
\end{figure}

In order to understand proxy behavior that occurs in the absence of explicit gender indicators, we first used the subsampled testing split, described above, to obtain per-token attention weights from the gender classifier for the DNN representation. We then used these weights to find ``proxy candidates''---i.e., the words that are most predictive of gender in the absence of explicit gender indicators. Specifically, we computed the sum of the per-token attention weights for each word type, and then selected the types with the largest sums as ``proxy candidates.'' Across multiple runs, we found that the words \textit{women}, \textit{husband}, \textit{mother}, \textit{woman}, and \textit{female} (ordered by decreasing total attention) were consistently ``proxy candidates.''\looseness=-1

\begin{figure*}[ht!]
\centering
\includegraphics[width=0.95\linewidth]{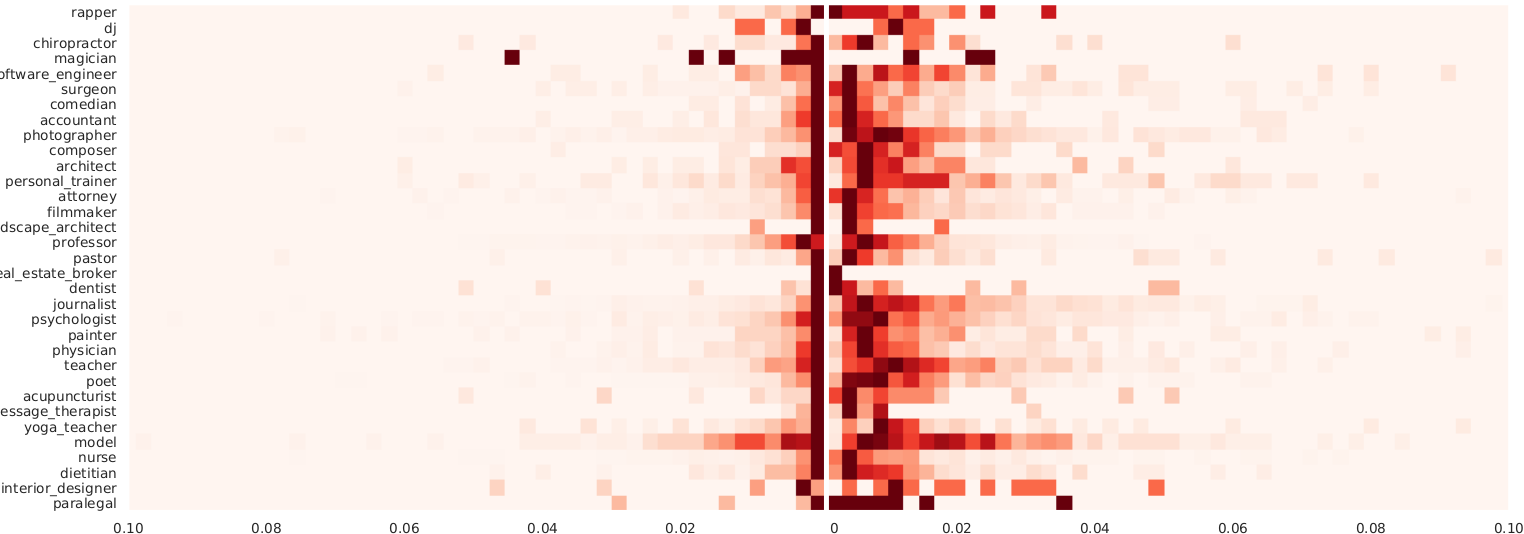}
\caption{Per-occupation histograms of the per-token attention weights from the DNN representation's occupation classifier for the word \textit{women}, with (left) and without (right) explicit gender indicators; occupations are ordered by TPR gender gap.}
\label{fig:attention_women}
\end{figure*}

For each ``proxy candidate,'' we then used our dataset's testing split, with and without explicit gender indicators, to create histograms of the per-token attention weights from the occupation classifier for the DNN representation. These histograms represent the extent to which that ``proxy candidate'' is predictive of occupation, with and without gender indicators. By comparing the histograms for each ``proxy candidate,'' we are able to identify words that are used as proxies for gender in the absence of explicit gender indicators: if there is a big difference between the histograms, then the ``proxy candidate'' is likely a proxy. Figure~\ref{fig:attention_women} shows per-occupation histograms for the word \textit{women}, with (left) and without (right) explicit gender indicators. It is clear that in the absence of explicit gender indicators, the classifier has larger attention weights for the word \textit{women} for all occupations. We see similar behavior for the other ``proxy candidates,'' suggesting that the classifier uses proxies for gender in the absence of explicit gender indicators.

The occupations in Figure~\ref{fig:attention_women} are ordered by TPR gender gap from negative to positive. For occupations in the middle, where there are small or no TPR gender gaps, the classifier still has non-zero attention weights for the word \textit{women}. This means that using gender information does not necessarily lead to a TPR gender gap. We also note that it's possible that the classifier is using gender information to differentiate between occupations with very different gender imbalances that are otherwise similar, such as physician and surgeon.\looseness=-1

\section{Discussion and Future Work}
\label{sec:discussion}

In this paper, we presented a large-scale study of gender bias in occupation classification using a new dataset of hundreds of thousands of online biographies. We showed that there are significant TPR gender gaps when using three different semantic representations: bag-of-words, word embeddings, and deep recurrent neural networks. We also showed that the correlation between these TPR gender gaps and existing gender imbalances in occupations may compound these imbalances. By performing simulations, we demonstrated that compounding imbalances are especially problematic if people repeatedly encounter  occupation classifiers because the underrepresented gender will become even further underrepresented.\looseness=-1

Recently, \citet{dwork2018fairness} showed that fairness does not hold under composition, meaning that if two classifiers are individually fair according to some fairness metric, then the sequential use of these classifiers will not necessarily be fair according the same metric. One interpretation of our finding regarding compounding imbalances is that unfairness holds under composition. Understanding why this is the case, especially given that fairness does not hold under composition, is an interesting direction for future work.

We found that the TPR gender gaps are reduced by ``scrubbing'' explicit gender indicators, while the classifiers' overall accuracies remain roughly the same. This constitutes an empirical example where there is little tradeoff between 
promoting fairness---in this case by ``scrubbing'' explicit gender indicators---and performance. This also constitutes an empirical example where fairness is improved by ``scrubbing'' sensitive attributes, contrary to other examples in the literature~\citep{kleinberg2018algorithmic}. That said, in the absence of explicit gender indicators, we did find that (1) we were able to predict the gender of a biography's subject better than random, even when controlling for occupation; (2) significant TPR gender gaps remain for some occupations; (3) there is still a positive correlation between the TPR gender gap for an occupation and the gender imbalance in that occupation, so existing gender imbalances may be compounded. These findings indicate that there are differences between men's and women's online biographies other than explicit gender indicators. These differences may be due to the ways that men and women represent themselves or due to men and women having different specializations within an occupation. Our findings highlight both the risks of using machine learning in a high-stakes setting and the difficulty of trying to promote fairness by ``scrubbing'' sensitive attributes.\looseness=-1

Our future work will focus primarily on understanding how best to mitigate TPR gender gaps and compounding imbalances in online recruiting and automated hiring. Finally, although we focused on gender bias, we note that
other biases, such as those involving race or socioeconomic status, may also be present in occupation classification. Our methodology and analysis approach may prove useful for quantifying such biases, provided relevant group membership information is available. Moreover, quantifying such biases is an important direction for future work---it is likely that they exist and, in the absence of evidence that they do not, online recruiting and automated hiring run the risk of compounding prior injustices.\looseness=-1

\newpage
\section{Appendix}

\appendix

\section{True positive rate gender gaps across representations}

Figure~\ref{image:bow_gap_wo} shows TPR gender gaps for BOW trained without gender indicators. Figures~\ref{image:we_gap} and ~\ref{image:we_gap_wo} show the results for WE, with and without gender indicators, respectively. Figures~\ref{image:dnn_gap} and ~\ref{image:dnn_gap_wo} show the results for DNN, with and without gender indicators, respectively.

\begin{figure}[ht]
\includegraphics[width=\linewidth]{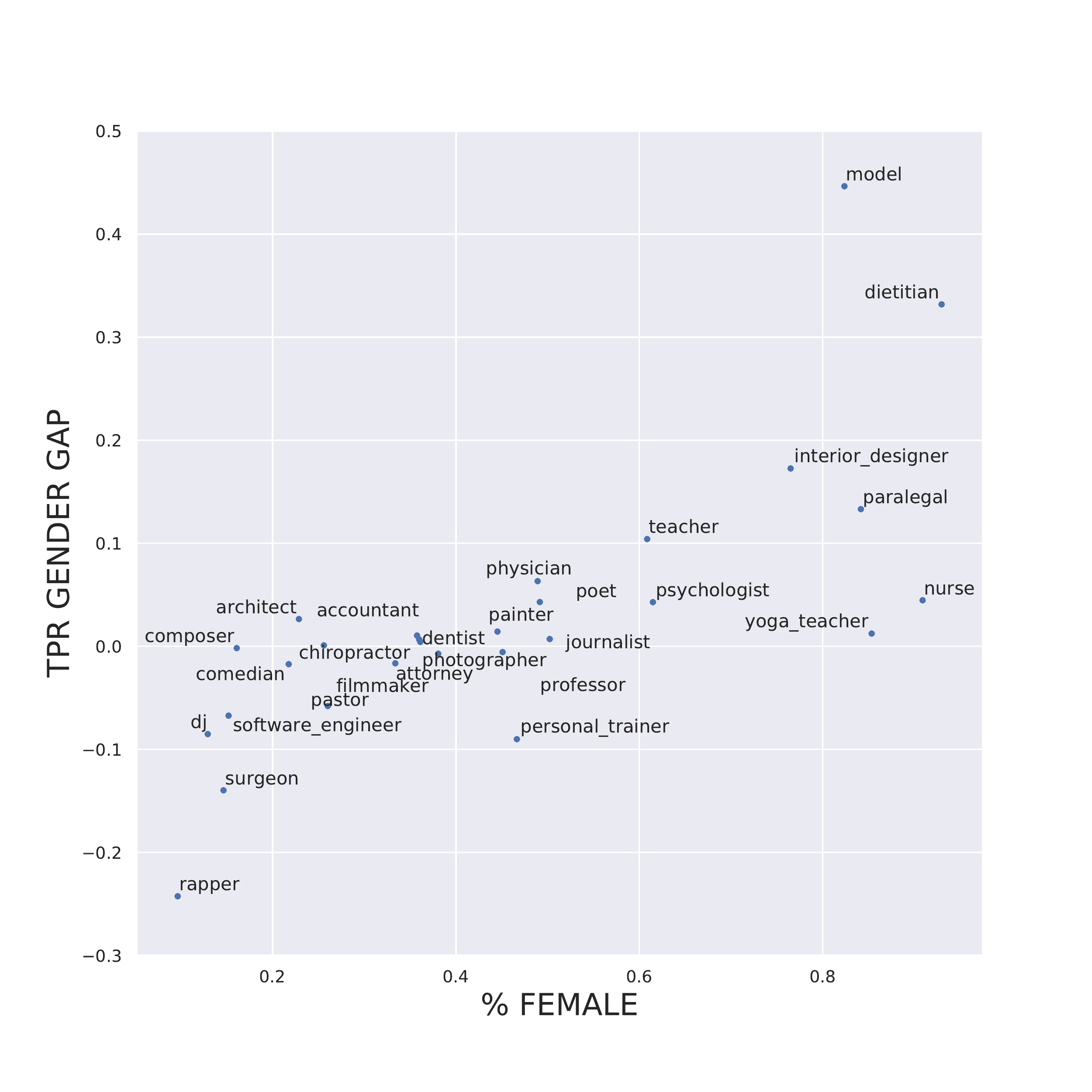}
\caption{Gender gap per occupation vs. $\%$ females in occupation for BOW trained without gender indicators.}
\label{image:bow_gap_wo}
\end{figure}

\begin{figure}[ht]
\includegraphics[width=\linewidth]{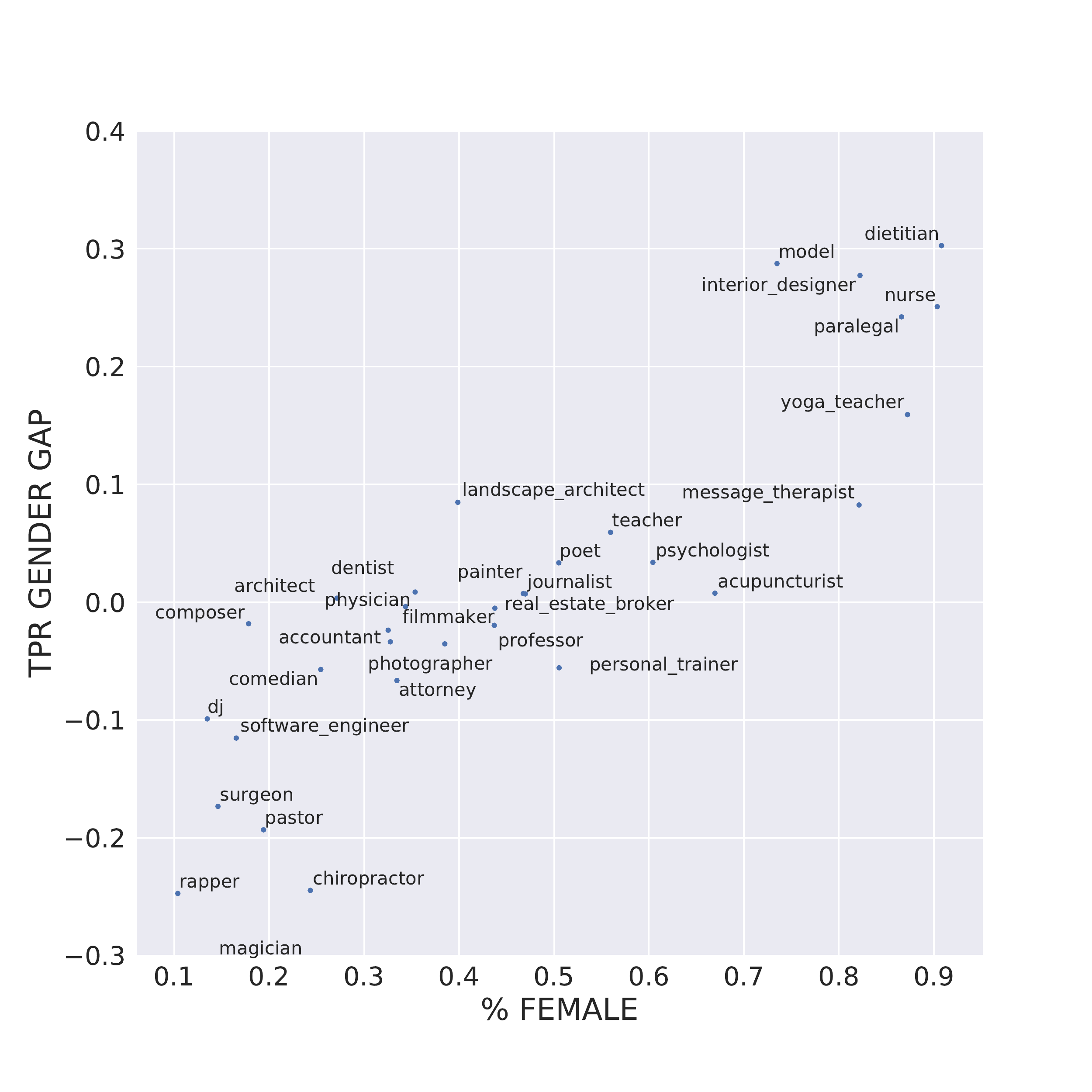}
\caption{Gender gap per occupation vs. $\%$ females in occupation for WE trained with gender indicators.}
\label{image:we_gap}
\end{figure}

\begin{figure}[ht]
\includegraphics[width=\linewidth]{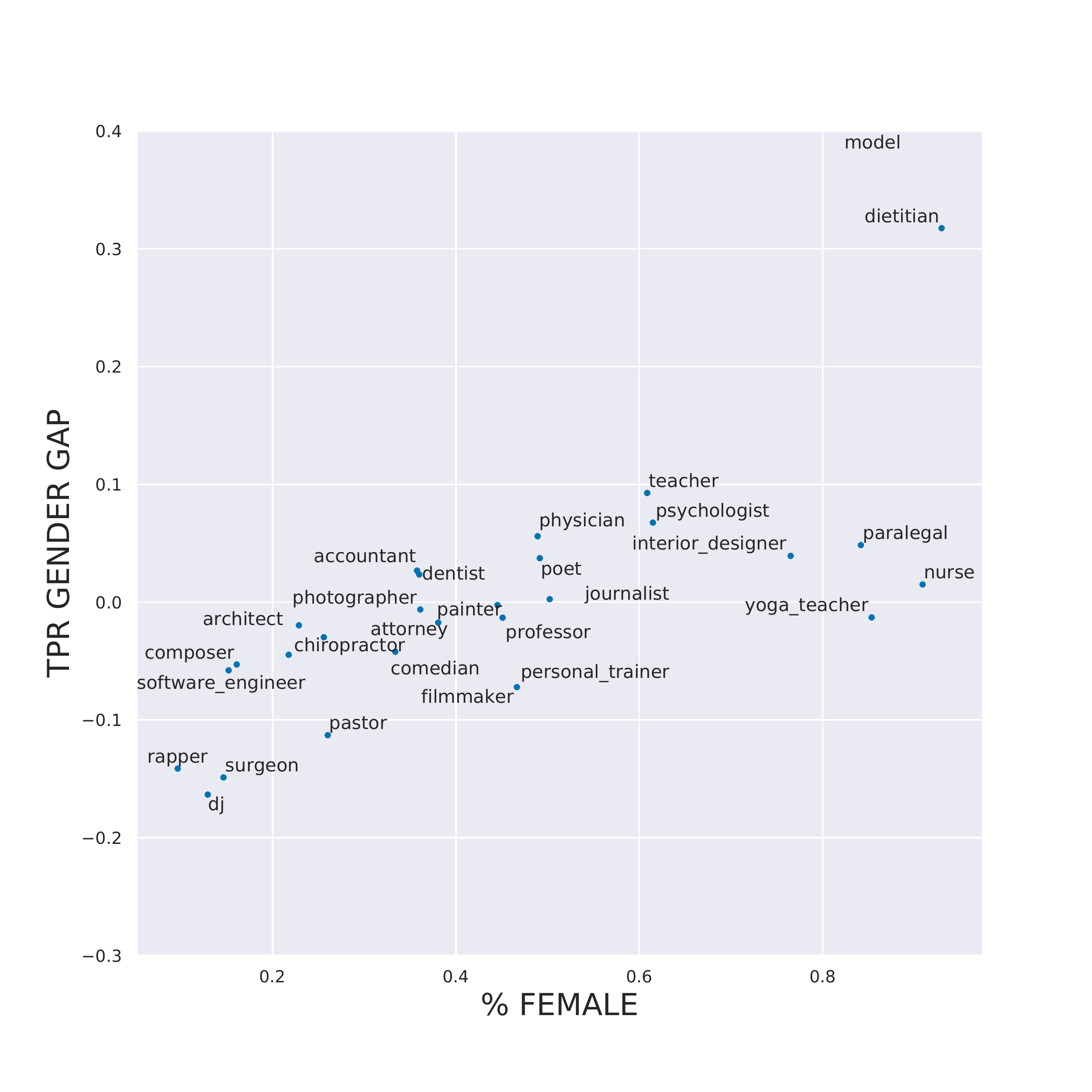}
\caption{Gender gap per occupation vs. $\%$ females in occupation for WE trained without gender indicators.}
\label{image:we_gap_wo}
\end{figure}

\begin{figure}[ht]
\includegraphics[width=\linewidth]{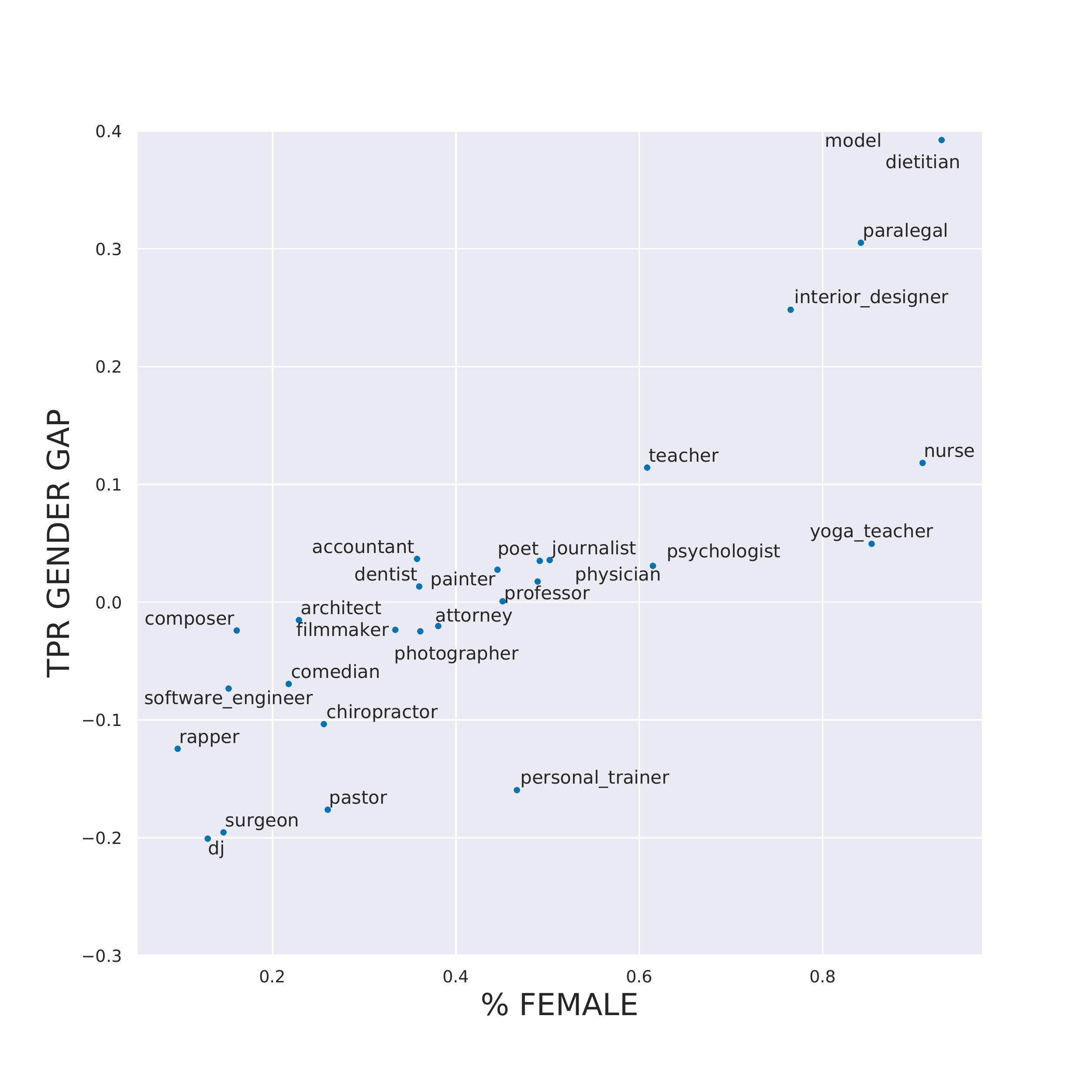}
\caption{Gender gap per occupation vs. $\%$ females in occupation for DNN trained with gender indicators.}
\label{image:dnn_gap}
\end{figure}

\begin{figure}[ht]
\includegraphics[width=\linewidth]{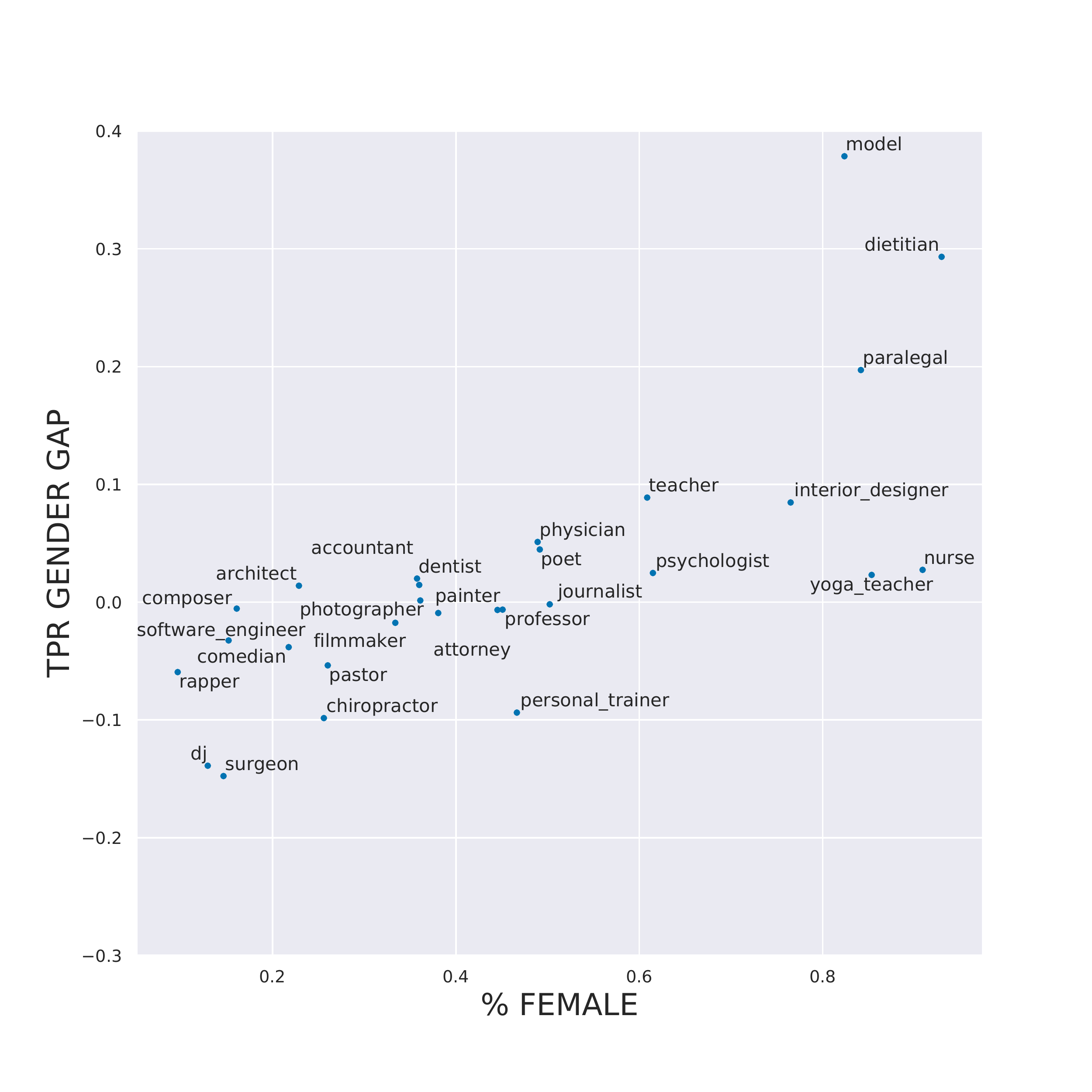}
\caption{Gender gap per occupation vs. $\%$ females in occupation for DNN trained without gender indicators.}
\label{image:dnn_gap_wo}
\end{figure}

\newpage

\section{Attention to gender}
\subsection{Attention to gender proxies}

\autoref{fig:attention_wife_husband} shows the aggregated attention of the DNN model to words ``wife'' and ``husband''. As with the word ``women'', the model trained without gender indicators places more attention on these words. Notice, however, that the shift in attention weights, while it exists, is smaller than for the word ``women'', which is consistent with the lower aggregate attention in the gender prediction model.

\begin{figure*}[h]
\centering
\begin{subfigure}{1\textwidth}
	\centering
	\includegraphics[width=0.9\linewidth]{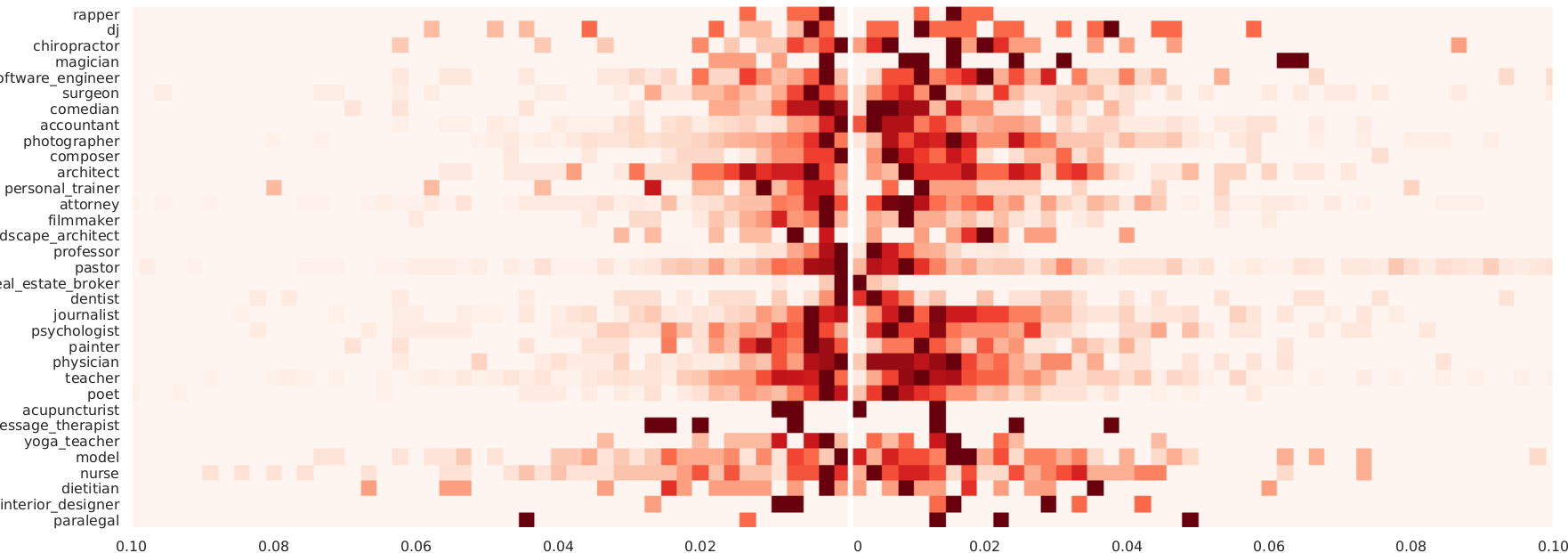}
 	\caption{Aggregated attention to word ``wife''}
 	\label{fig:attention_wife}
\end{subfigure}
\begin{subfigure}{1\textwidth}
	\centering
	\includegraphics[width=0.9\linewidth]{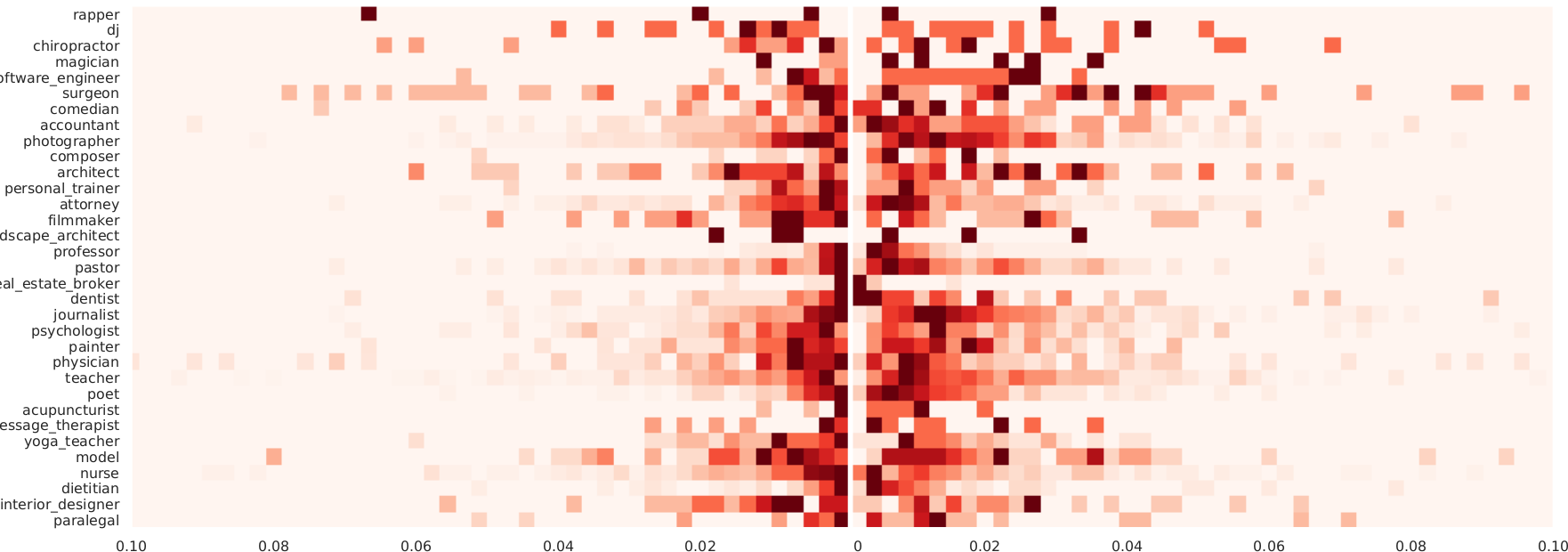}
 	\caption{Aggregated attention to word ``husband''}
 	\label{fig:attention_husband}
\end{subfigure}
\caption{Aggregated attention of DNN to words ``wife'' (\ref{fig:attention_wife}) and ``husband'' (\ref{fig:attention_husband}). In the left, results when model trained with gender indicators. In the right, results when model trained without gender indicators.}
\label{fig:attention_wife_husband}
\end{figure*}

\subsection{Attention to gender indicators}
\autoref{fig:attention_she} shows the attention of the model, trained with and without gender indicators, on the word ``she'' during the prediction of the occupation based on biographies \textit{with} gender indicators. One may expect that in the latter case the model would not attend to this word as it has not seen it during the training. However, the results indicate quite the opposite. In fact, the model puts \textit{much more} attention to it. This can be attributed to the use of word embeddings, which enables the model to learn about words even if it has not explicitly seen them. Interestingly, when exposed to the word ``she'' during prediction, the model seems to receive a stronger gender signal than it has seen during training, and pays a significant amount of attention to it.

\begin{figure*}[h]
\centering
\includegraphics[width=0.9\linewidth]{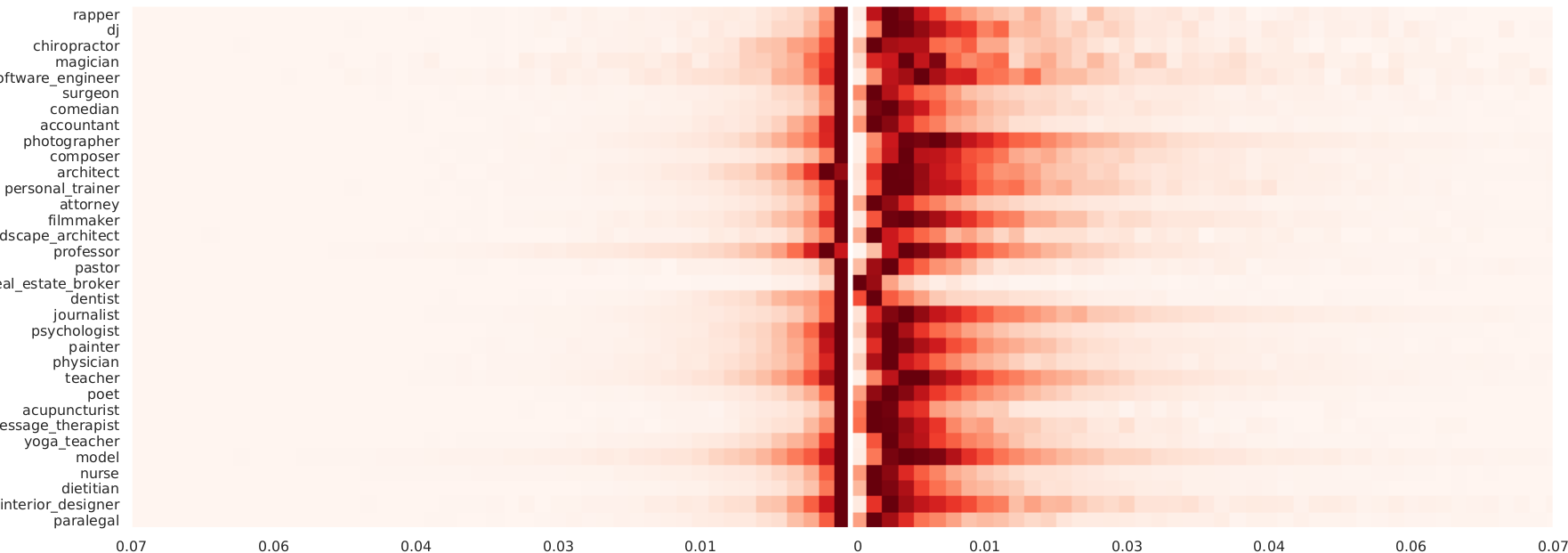}
\caption{Aggregated attention of DNN to word ``she''. In the left, results when model trained with gender indicators. In the right, results when model trained without gender indicators.}
\label{fig:attention_she}
\end{figure*}

\newpage




\bibliographystyle{ACM-Reference-Format}
\bibliography{bib}

\end{document}